\newcommand{\cd}{\mathcal{D}}
\newtheorem{lemma}{Lemma}
\newtheorem{observation}{Observation}
\newtheorem{theorem}{Theorem}
\newtheorem{proposition}{Proposition}
\title{\bf Extraction Theorems With Small Extraction Numbers}
\author{
Arjun Agarwal\thanks{Jesuit High School, USA. Email: \texttt{arjunagarwal010@gmail.com}.}\ \ \ \ \ 
Sayan Bandyapadhyay\thanks{Portland State University, USA. Email: \texttt{sayanb@pdx.edu}.}
} 
\date{} 
\theoremstyle{definition}
\begin{document}
\maketitle

\begin{abstract}
In this work, we develop Extraction Theorems for classes of geometric objects with small extraction numbers. These classes include intervals, axis-parallel segments, axis-parallel rays, and octants. We investigate these classes of objects and prove small bounds on the extraction numbers. The tightness of these bounds is demonstrated by examples with matching lower bounds.
\end{abstract} 


\section{Introduction}
Designing approximation algorithms for geometric covering problems is a popular line of research in computational geometry \cite{MustafaR09,HochbaumM85,dcg/BandyapadhyayBI20,ErlebachL08,mmrs-16}. In a typical covering problem, given a set of points and a set of geometric objects, the goal is to pick a subset of objects whose union contains the input points (i.e., cover them) and the subset optimizes a certain objective function. Covering problems arise in fields such as wireless and sensor networks, VLSI design, robotics, biology, and image processing~\cite{abrams2004set,cardei2005energy,chan2004hiding,HochbaumM85,huang2007predicting,choset2001coverage,svennebring2004building}. In recent time, an interesting byproduct of this research is the Extraction Theorem for geometric objects \cite{DBLP:conf/innovations/BandyapadhyayMR24}. For any set of weighted objects $S$, let $W(S)$ denote the sum of the weights of the objects.

\begin{theorem}[Extraction Theorem for Disks]
\label{thm:extract-disks}
\cite{DBLP:conf/innovations/BandyapadhyayMR24} Suppose we are given a set $\cd$ of disks along with a weight function $w:\cd \rightarrow \mathbb{R}^+$ and a set $T$ of $n$ points in the plane, such that each point of $T$ is contained in at least two disks of $\cd$. Then there exists a subset $\mathsf{sol}\subset \cd$ such that $\mathsf{sol}$ covers $T$ and $W(\cd\setminus \mathsf{sol})\ge W(\cd)/4$. Moreover, such a subset $\mathsf{sol}$ can be computed in polynomial time.
\end{theorem}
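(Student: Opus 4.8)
The plan is to recast the extraction problem as a partition (coloring) problem. I would seek to split $\cd$ into four classes $\cd = C_1 \cup C_2 \cup C_3 \cup C_4$ with the property that every point of $T$ is covered by disks belonging to at least two \emph{distinct} classes. Given such a partition, removing any single class $C_i$ still leaves a valid cover of $T$: each point $p$ is covered by disks in two different classes, so at least one of them survives the deletion of $C_i$. Hence $\mathsf{sol} := \cd \setminus C_i$ covers $T$, and choosing $i$ to maximize $W(C_i)$ yields $W(\cd \setminus \mathsf{sol}) = W(C_i) \ge W(\cd)/4$, exactly the claimed guarantee. This reduction is clean and, crucially, uses the weights only at the very end (to select the heaviest class), so the entire difficulty is pushed into producing the four-class partition.

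To build the partition I would first fix, for each point $p \in T$, two disks $D_p^1, D_p^2 \in \cd$ that both contain $p$ (possible since each point is covered at least twice). It then suffices to $4$-color $\cd$ so that $D_p^1$ and $D_p^2$ receive different colors for every $p$, since any point whose two selected disks differ in color automatically sees at least two colors. Encoding each selected pair as an edge gives a conflict graph $H$ on vertex set $\cd$, and the task becomes a proper $4$-coloring of $H$. Every edge of $H$ joins two disks sharing a point, hence two intersecting disks, so $H$ is a subgraph of the disk intersection graph; but that intersection graph has unbounded chromatic number, so the pairs $D_p^1, D_p^2$ must be chosen geometrically rather than arbitrarily.

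The heart of the argument --- and the step I expect to be the main obstacle --- is to choose the pairs so that $H$ is $4$-colorable, exploiting that disks are pseudodisks (any two boundary circles cross at most twice). The goal would be to select for each point a canonical pair, tied to the local structure of the disk arrangement at $p$ (for instance via the innermost intersection lens containing $p$, or a Delaunay-type rule), so that the resulting $H$ is planar; the Four Color Theorem then furnishes a proper $4$-coloring. Establishing this planarity (or a $3$-degenerate substitute) is the delicate point: one must argue that the chosen conflict edges admit a crossing-free drawing using the bounded-complexity structure of the pseudodisk arrangement, thereby ruling out the large cliques present in the full intersection graph. The constant $4$ matching the stated bound is consistent with this route, since planar graphs require exactly four colors in the worst case, and this is presumably also where a matching lower-bound configuration lives.

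Finally I would check polynomial running time. Selecting the pairs, assembling $H$, and reading off the heaviest class are elementary; the only nontrivial computation is the $4$-coloring, which can be produced in polynomial time, ideally by a direct geometric coloring read off from the ordering used to prove low degeneracy (thereby avoiding an appeal to the algorithmic Four Color Theorem). Combining these pieces outputs a set $\mathsf{sol}$ covering $T$ with $W(\cd \setminus \mathsf{sol}) \ge W(\cd)/4$ in polynomial time.
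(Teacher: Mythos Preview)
Your reduction---split $\cd$ into four classes so that each point of $T$ sees at least two colors, then discard the heaviest class---is exactly the content of the paper's Proposition~\ref{prop:coloring-to-extr-thm}, and the paper derives Theorem~\ref{thm:extract-disks} from it by invoking the known fact (Smorodinsky~\cite{smorodinsky2007chromatic}) that the hypergraph $H(\cd)$ induced by disks admits a proper $4$-coloring computable in polynomial time. Your plan to realize that coloring by selecting a planar Delaunay-type conflict subgraph and applying the Four Color Theorem is precisely the mechanism underlying that cited result, so your proposal coincides with the paper's approach.
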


This fundamental result led to the design of the first constant-factor approximation algorithms for a number of covering problems with the approximation factor of 4 \cite{DBLP:conf/innovations/BandyapadhyayMR24}. Intuitively, the theorem states that a subset of objects of weight of at least a quarter of the total weight can be removed (or extracted) and the residual objects still cover all the points. Note that the requirement that each point is in at least two objects is necessary. Moreover, the factor $1/4$ is tight: consider an arrangement of 4 disks where each pair contains a unique point -- 3 disks are needed in any cover. We refer to the factor 4 in Theorem \ref{thm:extract-disks} as an \textit{extraction number} for disks. In general, we say the Extraction Theorem holds for a class of objects with an \textit{extraction number} of $\alpha$ if for any set $\mathcal{O}$ of these objects with a weight function $w:\mathcal{O} \rightarrow \mathbb{R}^+$ and a set $T$ of $n$ points, such that each point of $T$ is contained in at least two objects of $\mathcal{O}$, there exists a subset $\mathsf{sol}\subset \mathcal{O}$ such that $\mathsf{sol}$ covers $T$ and $W(\mathcal{O}\setminus \mathsf{sol})\ge W(\mathcal{O})/\alpha$. 

Note that the lesser the extraction number, the better, as we will need smaller per unit weight to cover the points. Thus we are interested in the smallest possible extraction number for a given class of objects. However, it might not be possible to have a constant extraction number for all classes. A broad class for which this is known to be possible is the class of objects having linear union complexity \cite{DBLP:conf/innovations/BandyapadhyayMR24}. For rectangles, the best-known upper bound on the minimum extraction number is $O(\log n)$ \cite{DBLP:conf/innovations/BandyapadhyayMR24} which follows from \cite{ene2017geometric}. 

In this work, we investigate various classes with the focus of proving small bounds on the extraction numbers. In particular, we study the following classes: intervals in 1D, axis-parallel segments in 2D, axis-parallel rays in 2D, and octants in 3D containing $(+\infty, +\infty, +\infty)$. We prove the Extraction Theorem for each class with an extraction number $i$ where $i \in \{2,3,4\}$. (See Table \ref{table:1}.) We also show the tightness of these bounds by constructing examples with matching lower bounds.  

\begin{center}
\begin{table}[ht]
\centering
\begin{tabular}{ |p{6.1cm}|p{3.4cm}| }
 \hline
\centering \textbf{Objects $(\mathcal{O})$} &\textbf{Extraction number}\\ 
 \hline
\centering Intervals  &2 \\
\hline
\centering Axis-parallel segments  & 4 \\
\hline
\centering Axis-parallel rays of $i$ types for $i\in [2,4]$ &  $i$ \\
\hline
\centering Octants  & 4 \\
\hline
\centering Pseudo disks  & 4~\cite{smorodinsky2007chromatic}\\
\hline
\centering Bottomless rectangles  & 3~\cite{keszegh2012coloring} \\
\hline
\centering Half planes  & 3~\cite{keszegh2012coloring} \\
\hline
\centering Translates of bottomless rectangles 
& 2
~\cite{cardinal2020colouringbottomlessrectanglesarborescences}  \\
\hline
\centering Translates of quadrants
& 2 
~\cite{cardinal2020colouringbottomlessrectanglesarborescences}  \\
\hline
\end{tabular}
\caption{Summary of our results along with extraction number of objects that follow from Proposition~\ref{prop:coloring-to-extr-thm} and known results for coloring number of their hypergraphs.}
\label{table:1}
\end{table}
\end{center}
\section{Preliminaries}
Our results are based on a connection to geometric hypergraph coloring. Consider a hypergraph $H=(V,\mathcal{E})$. A $k$-coloring for $k\in \mathbb{N}$ of $H$ (i.e., the vertices) is a function $\phi: V\rightarrow \{1,\ldots,k\}$. A $k$-coloring $\phi$ of $H$ is called \textit{proper} if for every hyperedge $e\in \mathcal{E}$ with $|e|\ge 2$, $e$ contains $v_i,v_j$ with $\phi(v_i)\ne \phi(v_j)$, i.e., $e$ is non-monochromatic. The proper coloring number of $H$, denoted by $\chi(H)$, is the minimum integer $k$ for which there is a proper $k$-coloring of $H$. 

As mentioned before, we focus on hypergraphs induced by objects. Consider a set of objects $\mathcal{O}$ in $\mathbb{R}^d$. For a point $p$, let $o(p) = \{O\in \mathcal{O}\mid p\in O\}$. The hypergraph $H(\mathcal{O})$ induced by $\mathcal{O}$ is defined as $(\mathcal{O},\{o(p)\}_{p\in \mathbb{R}^d})$. Note that $H(\mathcal{O})$ has one hyperedge for each cell in the arrangement of $\mathcal{O}$. We establish the following connection between Extraction Theorems and proper hypergraph coloring. 

\begin{proposition}\label{prop:coloring-to-extr-thm}
    Consider any arbitrary set of objects $\mathcal{O}$ belonging to a specific class $\mathcal{C}$. Suppose $H(\mathcal{O})$ has a proper $\kappa$-coloring that can be computed in polynomial time. Then the Extraction Theorem holds for $\mathcal{C}$ with an extraction number of $\kappa$. 
\end{proposition}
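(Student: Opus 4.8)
The plan is to use the proper $\kappa$-coloring directly: its color classes furnish a partition of $\mathcal{O}$ into $\kappa$ groups, one of which is heavy enough to extract, while properness guarantees that removing a single color class cannot uncover any point of $T$.

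First I would invoke the proper $\kappa$-coloring $\phi:\mathcal{O}\to\{1,\dots,\kappa\}$ guaranteed by hypothesis (computable in polynomial time) and let $C_1,\dots,C_\kappa$ be the resulting color classes, where $C_i=\phi^{-1}(i)$. Since these classes partition $\mathcal{O}$, their weights sum to $W(\mathcal{O})$, so by averaging there is an index $j$ with $W(C_j)\ge W(\mathcal{O})/\kappa$. I would then set $\mathsf{sol}=\mathcal{O}\setminus C_j$, so that $W(\mathcal{O}\setminus\mathsf{sol})=W(C_j)\ge W(\mathcal{O})/\kappa$ is immediate; locating the heaviest class is trivially polynomial, so the whole construction runs in polynomial time.

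The crux is verifying that $\mathsf{sol}$ covers $T$. Fix any point $p\in T$. By hypothesis $p$ lies in at least two objects, so the hyperedge $o(p)=\{O\in\mathcal{O}\mid p\in O\}$ of $H(\mathcal{O})$ satisfies $|o(p)|\ge 2$. Because $\phi$ is proper, this hyperedge is non-monochromatic, meaning $o(p)$ contains two objects of distinct colors; in particular not all objects of $o(p)$ can have color $j$, so some $O\in o(p)$ has $\phi(O)\ne j$. Then $O\in\mathcal{O}\setminus C_j=\mathsf{sol}$ and $p\in O$, so $p$ is covered, as required.

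I expect no serious obstacle; the only point requiring care is the translation between the two hypotheses, namely that the ``each point is in at least two objects'' condition is exactly what forces the relevant hyperedges $o(p)$ for $p\in T$ to have size at least two, which is precisely the regime in which properness of $\phi$ is informative. One should also note that although $H(\mathcal{O})$ nominally has a hyperedge for every point of $\mathbb{R}^d$, each $o(p)$ with $p\in T$ is among them, so the proper coloring applies to it without any further argument.
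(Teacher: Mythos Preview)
Your proposal is correct and follows essentially the same argument as the paper: pick the heaviest color class of a proper $\kappa$-coloring, remove it, and use properness on the hyperedge $o(p)$ (which has size at least two for every $p\in T$) to certify that the remaining objects still cover $T$. The paper's proof is identical in structure and detail.
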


\begin{proof}
    Consider any set $\mathcal{O}$ of objects from $\mathcal{C}$ and a set $T$ of points, such that each point of $T$ is contained in at least two objects of $\mathcal{O}$. Also, fix a proper $\kappa$-coloring $\phi$ of $H(\mathcal{O})$ computed in polynomial time. Let $i$ be a number such that the weight of the objects in $\phi^{-1}(i)$ is the maximum. This weight is at least $W(\mathcal{O})/\kappa$. Construct $\mathsf{sol}$ to be $\mathcal{O}\setminus \phi^{-1}(i)$. We claim that $\mathsf{sol}$ is a cover. Consider any $p\in T$. We know $p$ is contained in at least two objects of $\mathcal{O}$. As $p$ is in a cell of the arrangement of $\mathcal{O}$, there is a hyperedge $e=o(p)$ in $H(\mathcal{O})$. Also, $\phi$ is a proper coloring, and so $\exists j\ne i$ and $O\in e$ such that $\phi(O)=j$. Hence, $p$ is covered by $\mathsf{sol}$, as $p\in O$ and $O\in \mathsf{sol}$. Now, $W(\mathcal{O}\setminus \mathsf{sol})=W(\phi^{-1}(i))\ge W(\mathcal{O})/\kappa$. Hence, the proposition follows. 
\end{proof}

Proposition \ref{prop:coloring-to-extr-thm} gives a framework for proving Extraction Theorems for general objects. Proper coloring of geometric hypergraphs is a well-studied area and the existing results directly lead to Extraction Theorems for a wide range of objects. For example, for a set $\cd$ of disks, $\chi(H(\cd))=4$, and a 4-coloring can be computed in polynomial time \cite{smorodinsky2007chromatic}. This gives an alternative proof of Theorem \ref{thm:extract-disks}. It is also known that for pseudo-disks, a 4-coloring can be computed in polynomial time~\cite{keszegh2020coloring}. Hence, the same theorem follows even for pseudo-disks. Similarly, for $n$ Jordan regions with union complexity $\mathcal{U}(n)$, an $O(\mathcal{U}(n))/n$-coloring can be computed in polynomial time \cite{smorodinsky2007chromatic}. (See the survey~\cite{smorodinsky2013conflict} for more details.) When $\mathcal{U}(n)$ is linear, this gives another proof of the Extraction Theorem for the linear union complexity objects. Keszegh~\cite{keszegh2012coloring} showed that the proper coloring number of hypergraphs induced by bottomless rectangles and half-planes is 3. 
Table~\ref{table:1}
lists additional extraction numbers that follow from Proposition~\ref{prop:coloring-to-extr-thm} and known results for the coloring number of hypergraphs.
To obtain our results, we prove similar bounds for the concerned classes of objects. Since the objects correspond to the vertices of the hypergraph, for the remainder of the paper we will refer to a coloring of a hypergraph as a coloring of the geometric objects. 

\section{Extraction theorems for restricted classes}
\subsection{Intervals}\label{1D}

\begin{lemma}\label{lem:intervals}
    Given a set $\mathcal{I}=\{I_1,I_2,\ldots,I_m\}$ of intervals, a proper 2-coloring of $H(\mathcal{I})$ can be computed in polynomial time. 
\end{lemma}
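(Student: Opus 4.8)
The plan is to prove the statement by induction on the number $m$ of intervals, where at each step I single out the interval with the \emph{smallest} right endpoint, color everything else recursively, and then argue that the singled-out interval can always be colored consistently. Recall from the definition of $H(\mathcal{I})$ that a $2$-coloring is proper exactly when, for every point $p$ with $|o(p)|\ge 2$, the set $o(p)$ of intervals through $p$ contains both colors; so it suffices to guarantee that no point is covered by two or more intervals that are all the same color. The base case $m\le 1$ is trivial since no hyperedge has size at least $2$.

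For the inductive step, let $I^*=[\ell^*,r^*]$ be an interval of $\mathcal{I}$ with $r^*$ minimum, and let $\phi'$ be a proper $2$-coloring of $\mathcal{I}\setminus\{I^*\}$ obtained by induction. Adding $I^*$ back only changes the hyperedges $o(p)$ for $p\in[\ell^*,r^*]$; every other point keeps exactly the intervals it had, so it remains fine by induction. The central structural observation I would establish is that, restricted to the span $[\ell^*,r^*]$, the active sets are \emph{nested}: since every other interval meeting $[\ell^*,r^*]$ has right endpoint at least $r^*$, none of them ends before $r^*$, so as $p$ increases across $[\ell^*,r^*]$ intervals are only ever added and never removed. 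Hence $o(p)\setminus\{I^*\}$ forms a monotone chain under inclusion as $p$ ranges over $[\ell^*,r^*]$.

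The payoff of this chain structure is that at most one color can be ``exclusive'' inside the span: if some point $p$ had $o(p)\setminus\{I^*\}$ nonempty and entirely color $1$, while another point $q$ had it nonempty and entirely color $2$, then the smaller of these two nested sets would sit inside the larger, forcing the larger to contain both colors, a contradiction. Thus all monochromatic active sets inside $[\ell^*,r^*]$ share a single color $c_0$, and I would color $I^*$ with the opposite color. This turns every previously monochromatic cell within the span bichromatic, while leaving already-bichromatic cells bichromatic and singleton cells harmless, completing the induction.

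I expect the main obstacle to be exactly isolating and verifying this ``single exclusive color'' property: once the monotonicity is set up correctly it follows from the short chain argument above, but one must state the nesting claim carefully and check that it survives coincident endpoints (an interval whose right endpoint equals $r^*$ still does not end strictly inside the span, so monotonicity on the closed interval $[\ell^*,r^*]$ is preserved and no perturbation is needed). For the polynomial-time guarantee I would unfold the induction into an iterative procedure: sort the intervals by right endpoint in decreasing order and color them one at a time, at each step scanning the already-colored intervals active within the current span to read off $c_0$ and assigning the opposite color. This is clearly polynomial, and its correctness is precisely the inductive argument.
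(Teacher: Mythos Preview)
Your argument is correct and takes a genuinely different route from the paper. The paper proceeds constructively: it greedily builds a chain of ``key'' intervals (the leftmost one, then repeatedly the interval starting inside the current key that finishes latest), colors the keys sequentially with a parity rule that flips only when no non-key interval spans the overlap of two consecutive keys, and finally colors each non-key interval opposite to the key that determines it. Your approach instead peels off the interval $I^*$ with minimum right endpoint and uses the single structural fact that, within $[\ell^*,r^*]$, the sets $o(p)\setminus\{I^*\}$ are nested, so at most one color can occur as a monochromatic active set there. This is cleaner and shorter: the paper's case analysis over key/non-key intervals and their various containment patterns is replaced by one monotonicity observation plus a two-line chain argument. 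The paper's approach, on the other hand, yields an explicit one-pass coloring that is perhaps more transparent to implement and exposes the combinatorial structure (the key chain) that the rest of the paper's arguments are built on. Both give polynomial time; yours unfolds naturally into the decreasing-right-endpoint sweep you describe.
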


\begin{proof}
We assume that the $m$ intervals lie on the 
$x$-axis and their union defines a connected region on the $x$-axis. Otherwise, we can consider the disconnected regions separately. 
Let the segment $I_i \in \mathcal{I}$
be represented by the ordered tuple $(a_i, b_i)$ denoting the $x$ co-ordinates 
of its endpoints with $a_i < b_i$. First, we compute a set of \textit{key} intervals $\mathcal{I}'=\{I_{i^1},I_{i^2},\ldots,I_{i^\tau}\}\subseteq \mathcal{I}$, one in every iteration. Initially, $I_{i^1}$ is the interval with the minimum $a_i$-value. If there are multiple such intervals, choose the longest one. Suppose we have already picked $\{I_{i^1},I_{i^2},\ldots,I_{i^j}\}$. $I_{i^{j+1}}$ is the interval that starts in $[a_{i^j},b_{i^j}]$, finishes strictly after $b_{i^j}$, and finishes the last, i.e., having the maximum $b_i$-value. If there is no such interval, stop. If multiple such intervals finish last, pick any of them.  

\begin{observation}
    For $2\le j\le \tau-1$, the only interval of $\mathcal{I}'$ that $I_{i^j}$ intersects are $I_{i^{j-1}}$ and $I_{i^{j+1}}$. 
\end{observation}

We color the intervals in the following way. $I_{i^1}$ is colored by color 1. Now, consider $I_{i^{j+1}}$ for $j\ge 1$. If there is an interval in $\mathcal{I}\setminus \mathcal{I}'$ that contains $I_{i^{j}}\cap I_{i^{j+1}}$, $I_{i^{j+1}}$ is colored by the color of $I_{i^{j}}$. Otherwise, $I_{i^{j+1}}$ is colored by the color distinct from that of $I_{i^{j}}$. Now, consider any interval $I_i\in \mathcal{I}\setminus \mathcal{I}'$. If $I_i$ is fully contained in $I_{i^{j}}\cap I_{i^{j+1}}$ for some $j$, color it arbitrarily. Otherwise, if $I_i$ is fully contained in an $I_{i^{j}}$ for some $j$, color it by the color distinct from that of $I_{i^{j}}$. Otherwise, $I_i$ is not fully contained in any of the intervals of $\mathcal{I}'$. In this case, $I_i$ fully contains $I_{i^{j}}\cap I_{i^{j+1}}$ for a unique $j$. Color $I_i$ by the color distinct from that of $I_{i^{j}}$.

We prove the coloring is proper. Consider any hyperedge $e$ corresponding to a point $p \in \mathbb{R}$ with $|e|\ge 2$. First, suppose $p\in I_{i^{j}}\cap I_{i^{j+1}}$ for some $j$. If $I_{i^{j}}$ and $I_{i^{j+1}}$ have distinct colors, we are done. So, suppose they have the same color. By our coloring scheme, there is an $I_i\in \mathcal{I}\setminus \mathcal{I}'$ that contains $I_{i^{j}}\cap I_{i^{j+1}}$. Moreover, there is a unique index $j$ for which this is true. By our coloring scheme, the color of $I_i$ is different from that of $I_{i^{j}}$, and $e$ is non-monochromatic. 

Now, suppose $p$ is in a unique $I_{i^{j}}$. Thus, there is an $I_i\in \mathcal{I}\setminus \mathcal{I}'$ that contains $p$. Also, this $I_i$ may be fully contained in at most one interval of $\mathcal{I}'$. If it is fully contained in an interval of $\mathcal{I}'$, that must be $I_{i^{j}}$, as both intervals contain $p$. So, the color of $I_i$ is different from that of $I_{i^{j}}$. Otherwise, $I_i$ fully contains either $I_{i^{j-1}}\cap I_{i^{j}}$ or $I_{i^{j}}\cap I_{i^{j+1}}$. In both cases, the color of $I_i$ is different from that of $I_{i^{j}}$ by our scheme. 
\end{proof}

\begin{theorem}
    Intervals admit an extraction number of 2 and there are instances with intervals where the minimum extraction number is 2. 
\end{theorem}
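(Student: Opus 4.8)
The plan is to prove the two halves of the statement separately: the upper bound (extraction number at most $2$) follows by chaining the two tools already established in the excerpt, while the lower bound (no extraction number below $2$ is achievable) follows from a minimal explicit construction.

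For the upper bound I would simply invoke Lemma~\ref{lem:intervals} together with Proposition~\ref{prop:coloring-to-extr-thm}. Lemma~\ref{lem:intervals} guarantees a proper $2$-coloring of $H(\mathcal{I})$ computable in polynomial time, and Proposition~\ref{prop:coloring-to-extr-thm} converts any polynomial-time proper $\kappa$-coloring of the induced hypergraph into an Extraction Theorem with extraction number $\kappa$. Instantiating $\kappa = 2$ yields the extraction number of $2$ for intervals, together with the claimed polynomial running time, so this direction is immediate.

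For the lower bound (tightness) I would exhibit a small instance in which no more than half of the total weight can ever be extracted. Take two overlapping intervals $I_1$ and $I_2$, each of weight $1$, and a single point $p$ lying in $I_1 \cap I_2$; then $p$ is contained in exactly two objects, so the hypothesis of the Extraction Theorem is satisfied. Any covering subset $\mathsf{sol}$ must contain at least one of $I_1, I_2$ in order to cover $p$, so $\mathcal{O}\setminus\mathsf{sol}$ contains at most one interval and $W(\mathcal{O}\setminus\mathsf{sol}) \le 1 = W(\mathcal{O})/2$. Consequently, for any $\alpha < 2$ we would need $W(\mathcal{O}\setminus\mathsf{sol}) \ge W(\mathcal{O})/\alpha > W(\mathcal{O})/2$, which is impossible on this instance; hence no extraction number strictly below $2$ can hold.

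There is no substantial obstacle here: both directions are short, the upper bound being a direct application of the preceding results and the lower bound a one-line weight count. The only points requiring care are verifying that the lower-bound instance genuinely meets the ``covered at least twice'' precondition, so that it is a legitimate instance of the Extraction Theorem, and getting the quantifiers right—namely, that tightness asserts the existence of a single bad instance rather than a property of all instances. I would therefore keep the construction deliberately minimal and make the weight accounting explicit so as to pin down the factor of $2$ exactly.
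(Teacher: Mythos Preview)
Your proposal is correct and matches the paper's proof essentially line for line: the upper bound is obtained by combining Lemma~\ref{lem:intervals} with Proposition~\ref{prop:coloring-to-extr-thm}, and the lower bound uses the same two-unit-weight-intervals-with-a-common-point instance. The paper's version is terser, but the content is identical.
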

\begin{proof}
    Proposition \ref{prop:coloring-to-extr-thm} and Lemma \ref{lem:intervals} imply that  intervals admit an extraction number of 2. Consider the case where we have two unit-weight intervals with a common point. Any cover must pick one interval demonstrating that there are instances where the minimum extraction number is $2$. 
\end{proof}

\subsection{Axis-parallel segments}\label{2Dseg}

\begin{lemma}\label{lem:2Dseg}
    Suppose $\mathcal{X} = \cup_{i=1}^{m_1}\mathcal{X}_i$ where each $\mathcal{X}_i$ is a set of horizontal segments with the same $y$ coordinate value and $\mathcal{Y} = \cup_{i=1}^{m_2}\mathcal{Y}_i$ where each $\mathcal{Y}_i$ is a set of vertical segments with the same $x$ coordinate value. For $\mathcal{D} = \mathcal{X} \cup \mathcal{Y}$, with $|D| = m_1 + m_2 = m$, a proper 4-coloring of $H(\mathcal{D})$ can be computed in polynomial time. 
\end{lemma}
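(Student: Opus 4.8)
The plan is to reduce this two–dimensional problem to the one–dimensional setting of Lemma~\ref{lem:intervals} and then glue the two reductions together using \emph{disjoint} color palettes for the horizontal and vertical segments. The guiding observation is that for any point $p=(x_0,y_0)$ lying on at least two segments of $\mathcal{D}$, the hyperedge $o(p)$ falls into exactly one of three cases: every segment through $p$ is horizontal, every segment through $p$ is vertical, or $p$ lies on at least one horizontal and at least one vertical segment. In the first two cases the segments involved are collinear, so they live inside a single group $\mathcal{X}_i$ or $\mathcal{Y}_i$, which is nothing but a one–dimensional interval instance; in the mixed case properness comes for free as soon as no horizontal segment shares a color with any vertical segment.

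Concretely, I would first process each horizontal group $\mathcal{X}_i$ independently. Because all segments of $\mathcal{X}_i$ share a common $y$-coordinate, their projections onto the $x$-axis form a set of intervals, so Lemma~\ref{lem:intervals} produces a proper $2$-coloring of $H(\mathcal{X}_i)$ in polynomial time; I would draw these colors from the palette $\{1,2\}$ for every horizontal group. Symmetrically, I would apply Lemma~\ref{lem:intervals} to each vertical group $\mathcal{Y}_i$, whose members share a common $x$-coordinate and whose projections onto the $y$-axis are intervals, but this time using the disjoint palette $\{3,4\}$. Running the one–dimensional procedure once per group and concatenating the outputs yields a $4$-coloring of all of $\mathcal{D}$ in polynomial time.

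It then remains to verify properness, i.e.\ that every hyperedge $o(p)$ with $|o(p)|\ge 2$ is non-monochromatic, which I would do along the three cases above. If $p$ lies only on horizontal segments, each such segment must have $y$-coordinate equal to $y_0$ and hence belongs to the single group $\mathcal{X}_i$ with that $y$-value; these segments are exactly a hyperedge of size $\ge 2$ in $H(\mathcal{X}_i)$, which the $2$-coloring of that group makes non-monochromatic. The all–vertical case is identical. If instead $p$ lies on both a horizontal and a vertical segment, the former is colored in $\{1,2\}$ and the latter in $\{3,4\}$, so $o(p)$ already contains two distinct colors. The only genuine content is the equivalence "collinear $\iff$ same group'': a horizontal segment through $p$ necessarily lies on the line $y=y_0$ and therefore sits in the one group $\mathcal{X}_i$ with that $y$-value, while segments from groups with a different $y$-value cannot pass through $p$ at all, so the within–group coloring really does govern all horizontal segments through $p$. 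I expect no serious obstacle here: the disjoint–palette trick dispatches every cross-type intersection trivially, and all intra-type intersections are handed off wholesale to Lemma~\ref{lem:intervals}.
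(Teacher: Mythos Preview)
Your proposal is correct and follows essentially the same approach as the paper: apply Lemma~\ref{lem:intervals} to each collinear group $\mathcal{X}_i$ with palette $\{1,2\}$ and to each $\mathcal{Y}_j$ with the disjoint palette $\{3,4\}$, then observe that every hyperedge either lives inside a single group (handled by the interval lemma) or mixes a horizontal and a vertical segment (handled by palette disjointness). Your case analysis is in fact spelled out a bit more carefully than the paper's, but the argument is the same.
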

\begin{proof}
    Each of the sets $\mathcal{X}_i$ and $\mathcal{Y}_j$ is a set of intervals in 1D. 
    Thus, as shown in Lemma~\ref{lem:intervals} a proper 2-coloring of $H(\mathcal{X}_i)$ and $H(\mathcal{Y}_j)$ can be computed in polynomial time. Color each $H(\mathcal{X}_i)$ using colors $\{1,2\}$ and $H(\mathcal{Y}_j)$ using colors $\{3,4\}$. 
    Now, the intersection of any pair of segments from $\mathcal{X}_i$ and $\mathcal{X}_{i'}$, respectively, is empty. The same holds between $\mathcal{Y}_{j}$ and $\mathcal{Y}_{j'}$. The only other intersections are horizontal-vertical intersections across $\mathcal{X}_i$ and $\mathcal{Y}_j$. Therefore $H(\mathcal{D})$  will consist of copies of $H(\mathcal{X}_i)$ for $1\le i\le m_1$, $H(\mathcal{Y}_j)$ for $1\le j\le m_2$, plus additional hyperedges corresponding to intersection points across $\mathcal{X}_i$ and $\mathcal{Y}_j$. Each copy of $H(\mathcal{X}_i)$, $H(\mathcal{Y}_j)$ has a proper $2$-coloring. In addition, the new hyperedges connect vertices that have different colors. Therefore we have a proper $4$-coloring of $H(\mathcal{D})$. 
\end{proof}
\begin{theorem}
    Axis-parallel segments admit an extraction number of 4 and there are instances with axis-parallel segments where the minimum extraction number is 4. 
\end{theorem}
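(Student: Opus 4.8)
The statement has two halves, and the upper bound is immediate: Proposition~\ref{prop:coloring-to-extr-thm} together with Lemma~\ref{lem:2Dseg} gives a polynomial-time proper $4$-coloring of $H(\mathcal{D})$ for any set $\mathcal{D}$ of axis-parallel segments, hence an extraction number of $4$. So the real content is the matching lower bound: I must exhibit a concrete instance (a set $\mathcal{O}$ of axis-parallel segments, with unit or if necessary general weights, and a point set $T$ in which every point lies in at least two segments) for which every cover must keep at least $3/4$ of the total weight, i.e. the maximum weight one can extract is exactly $W(\mathcal{O})/4$.

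My plan is to phrase extraction purely combinatorially. A subset $R\subseteq\mathcal{O}$ is extractable exactly when it is \emph{hyperedge-free}, i.e. for every $p\in T$ the edge $o(p)$ is not entirely contained in $R$; equivalently $\mathcal{O}\setminus R$ must be a transversal (hitting set) of the hyperedges $\{o(p)\}$. Thus I want a configuration whose induced hypergraph has minimum transversal $\tfrac34|\mathcal{O}|$ (in the weighted version, no hyperedge-free set of weight exceeding $W(\mathcal{O})/4$). A guiding sanity check comes from Proposition~\ref{prop:coloring-to-extr-thm} itself: since every color class of a proper coloring is hyperedge-free, the maximum extractable weight is at least $W(\mathcal{O})/\chi(H(\mathcal{O}))$, so any instance witnessing extraction number $4$ must in particular satisfy $\chi(H(\mathcal{O}))=4$ (three colors would force extraction at most $3$). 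The construction therefore has to be genuinely $4$-chromatic \emph{and} balanced, so that no hyperedge-free set beats the four color classes. I would assemble it from a small number of horizontal and vertical families, place the witnessing points of $T$ at the appropriate arrangement cells, and then verify two things separately: (i) that no proper $3$-coloring exists, by a short case analysis on the forced color constraints, and (ii) the extraction bound directly, by checking that deleting any two segments already leaves some point of $T$ uncovered.

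The main obstacle is that the clean gadget behind the disk bound of Theorem~\ref{thm:extract-disks} --- four objects in which every pair shares a private point, so the conflict structure is $K_4$ --- is \emph{not realizable} with axis-parallel segments. Two horizontal segments can meet only if they are collinear, and then they cross \emph{any} vertical segment at the very same point; consequently one can never produce four distinct private pairwise intersections among two horizontals and two verticals, and the would-be pairwise constraints $\{H_i,V_j\}$ collapse into a single size-$3$ or size-$4$ hyperedge. Worse, if all segments lie on distinct lines the only hyperedges are crossings, the conflict hypergraph becomes bipartite between horizontals and verticals, and one can always extract an entire side --- half the total weight --- giving extraction number merely $2$. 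Escaping this bipartite trap forces the construction to use collinear overlaps to manufacture odd, non-bipartite constraints, while positioning the verticals so that these overlaps are \emph{not} pierced (piercing would merge two size-$2$ constraints into a weaker size-$3$ one that no longer blocks deletions). Balancing these competing demands --- enough collinear overlap to push the chromatic number up to $4$, but enough separation to keep the pairwise constraints intact and the ratio tight at $1/4$ --- is exactly where the difficulty lies; I expect to need several horizontal and vertical families, possibly with a carefully chosen weighting, and I anticipate that establishing $\chi(H(\mathcal{O}))\ge 4$ will be the most delicate step.
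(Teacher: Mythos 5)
Your upper bound half is fine and is exactly the paper's argument (Proposition~\ref{prop:coloring-to-extr-thm} plus Lemma~\ref{lem:2Dseg}), but the lower bound half has a genuine gap: it is a research plan, not a proof. You never exhibit the instance; the two steps you yourself identify as the real content --- showing $\chi(H(\mathcal{O}))\ge 4$ for your (unspecified) configuration and showing no hyperedge-free set exceeds weight $W(\mathcal{O})/4$ --- are deferred (``I would assemble\dots'', ``I anticipate\dots will be the most delicate step''). Worse, your concrete verification criterion (ii), that deleting \emph{any two} segments leaves some point of $T$ uncovered, is self-defeating: it requires every pair of segments to jointly contain a hyperedge, i.e.\ a private point of $T$, and since your own upper bound guarantees an extractable set of weight at least $W(\mathcal{O})/4$, such an instance would have to be essentially the four-segment $K_4$ gadget with balanced weights --- precisely the structure you correctly prove is unrealizable with axis-parallel segments (and the same Helly-type argument rules out pairwise-private points among three collinear segments). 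So the plan, taken literally, cannot be completed; any correct verification must be weight-based on a much larger instance, and you give no candidate.

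The paper sidesteps this entirely by not seeking a single instance with extraction ratio exactly $1/4$. It reproduces a construction of Coaduro et al.~\cite{coaduroindependence}: $k$ diagonal ``$k$-boxes'' giving $m=4k^2$ unit-weight segments whose intersection points form $T$, for which the independence number of the intersection graph is at most $m/4+c\sqrt{m}$. An extractable set in this instance is an independent set, so every cover has weight at least $3m/4-c\sqrt{m}$; assuming an extraction number of $4-\epsilon$ forces $\epsilon\le 16c/(4c+\sqrt{m})$, which tends to $0$ as $m$ grows, so no absolute constant $\epsilon>0$ is possible. Tightness is thus established \emph{asymptotically}, by a family of instances whose extraction numbers approach $4$, rather than by one exact instance --- indeed, whether a finite instance with extraction number exactly $4$ exists is not settled by the paper, and your own sanity check ($\chi(H(\mathcal{O}))=4$ is necessary) shows your goal is at least as hard as the chromatic question you postponed. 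To repair your proof, weaken the target from ``no hyperedge-free set of weight $>W/4$'' to ``no hyperedge-free set of weight $>W/4+o(W)$'' and supply (or cite) an independence-number bound of this type for an explicit family; your observation that collinear overlaps are needed to escape the bipartite trap is correct and is exactly what the $k$-box construction exploits, but the obstruction analysis alone does not substitute for the construction.
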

\begin{proof}
By Proposition~\ref{prop:coloring-to-extr-thm} and Lemma~\ref{lem:2Dseg}, we obtain the Extraction Theorem for axis-parallel segments in 2D with an extraction number of 4. 
We will now prove that there are instances with axis-parallel segments where the minimum extraction number is $4$. 

Let us assume that the assertion is not true and the minimum extraction number for axis-parallel segments is $4-\epsilon$ for an absolute constant $\epsilon$. This implies that for a set $\mathcal{O}$ of segments and a set of points $T$, there is a cover $\mathsf{sol}$ such that $W(\mathcal{O} \setminus \mathsf{sol}) \geq W(\mathcal{O})/(4-\epsilon)$, and so $W(\mathsf{sol}) \leq W(\mathcal{O}) - \dfrac{W(\mathcal{O})}{4-\epsilon}$. 
\begin{figure}
    \centering
    \includegraphics[width=0.5\linewidth]{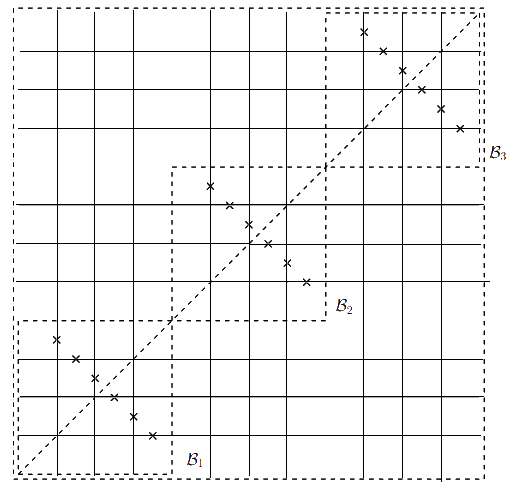}
    \caption{Configuration reproduced from \cite{coaduroindependence} demonstrating tightness of extraction number}
    \label{fig:MK}
\end{figure}
Consider the example instance in Figure~\ref{fig:MK}    with $m$ axis-parallel segments arranged as follows. The smaller box, labeled as $\mathcal{B}_i$, in general, consists of $2k$ horizontal and $2k$ vertical segments distributed on $k$ horizontal and $k$ vertical lines, respectively. This box is referred to as a $k$-box. The crosses mark the intersection point of these segments. For the crosses on the vertical lines, there is one up segment and one down segment that meet at the intersection point. Similarly, for the crosses on the horizontal line, there is one left segment and one right segment meeting at the intersection point. In Figure~\ref{fig:MK}, there are three 3-boxes. Consider an arrangement of a larger square with $k$ distinct $k$-boxes arranged in a diagonal similar to Figure~\ref{fig:MK}. Such an arrangement has $m=4k^2$ axis parallel segments.
Coaduro~et al.~\cite{coaduroindependence} proved that the independence number of the intersection graph of these segments is at most $\dfrac{m}{4}+c \sqrt{m}$  for some constant $c$. Considering weight 1 for each segment, this implies that the weight of any cover $\mathsf{sol}$ of all the intersection points of these segments,   
$W(\mathsf{sol}) \geq \dfrac{3m}{4} - c \sqrt{m}$. But, we know that $W(\mathsf{sol}) \leq \dfrac{m}{4-\epsilon}$ due to our assumption on the extraction number. Combining the two inequalities for $W(\mathsf{sol})$, we have $\dfrac{3m}{4} - c \sqrt{m} \leq m - \dfrac{m}{4-\epsilon}$. Solving this inequality gives $\epsilon \leq \dfrac{16c}{4c+\sqrt{m}}$. Now $\epsilon$ can be made arbitrarily small by increasing $m$. Therefore, it cannot be an absolute constant. This proves that the bound of $4$ for the extraction number is tight.
\end{proof}

\subsection{Axis-parallel rays}

A ray can be considered as an extension of a line segment with one of the sides open (unbounded). We define rays as of orientation 1, 2, 3, 4 depending on their orientation as follows. Rays of orientation 1 are parallel to the $x$ axis and open at the right in the direction of the positive $x$ axis. Rays of orientation 2 are parallel to the $x$ axis and open at the left in the direction of the negative $x$ axis. 
Rays of orientation 3 are parallel to the $y$ axis and open at the top in the direction of the positive $y$ axis. Rays of orientation 4 are parallel to the $y$ axis and open at the bottom in the direction of the negative $y$ axis.
We call a set of rays a configuration of type $i$ if it contains rays of $i$ types. For example, a configuration of type 3 could contain rays of orientation 1, 2 and 4. 
 
\begin{lemma}\label{lem:type2ray}
    Suppose $\mathcal{R} = \{R_1, R_2, \dots, R_m \}$ is a set of rays of type 2 where the set can be split into two subsets of rays $\mathcal{R}_1, \mathcal{R}_2$ with rays in each subset of same orientation. A proper 2-coloring of $H(\mathcal{R})$ can be computed in polynomial time. 
\end{lemma}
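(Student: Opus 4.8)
The plan is to exhibit a single explicit $2$-coloring that works for every choice of the two orientations at once, rather than case-splitting and spending two colors per orientation (which would give $4$ colors, as in Lemma~\ref{lem:2Dseg}, and is exactly what we cannot afford here). The one structural fact I will lean on is that rays of a single fixed orientation meeting a common line are totally nested: written as half-lines on that line, they form a chain $\rho_1\supseteq\rho_2\supseteq\cdots$ ordered by decreasing coverage, and any point covered by some ray of the family is covered by a prefix $\rho_1,\dots,\rho_k$ of this chain. Consequently there is a \emph{leading} ray $\rho_1$ that lies in $o(p)$ whenever \emph{any} ray of that orientation through $p$ does, and a second ray $\rho_2$ that lies in $o(p)$ whenever at least two of them do.

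Concretely, let $\mathcal R_1$ have orientation $o_a$ and $\mathcal R_2$ have orientation $o_b$. On each line supporting $o_a$-rays I order them by decreasing coverage and color the leading ray with $1$ and all the rest with $2$; on each line supporting $o_b$-rays I color the leading ray with $2$ and all the rest with $1$. Ties among coincident rays are broken arbitrarily when choosing the leading one, which is harmless. This is plainly computable in polynomial time.

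To verify properness I would take any point $p$ with $|o(p)|\ge 2$ and argue by cases. If $o(p)$ contains rays of both orientations, then it contains the leading $o_a$-ray through $p$ (colored $1$) and the leading $o_b$-ray through $p$ (colored $2$), so $o(p)$ is non-monochromatic. If $o(p)$ contains rays of only one orientation, say $o_a$, then by the prefix property it contains the first two rays $\rho_1,\rho_2$ of the corresponding chain, colored $1$ and $2$ respectively, so $o(p)$ is again non-monochromatic; the $o_b$-only case is symmetric. This exhausts all hyperedges of $H(\mathcal R)$ of size at least two.

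The main thing to get right is precisely the cross-orientation interaction: with only two colors we cannot color the two orientations independently, so the proof must guarantee that whenever both orientations appear at a point their two \emph{leading} rays are both present and are forced to opposite colors. The nesting/prefix structure is exactly what supplies this, and it is also what carries the degenerate cases (coincident endpoints, repeated rays, points covered by a single orientation) — I would check those explicitly, since they are the only places where the simple rule could break. I also note that in the parallel subcases (the two orientations being $\{1,2\}$ or $\{3,4\}$) one could instead decompose by line and invoke Lemma~\ref{lem:intervals}, reusing the two colors across lines, which is a handy consistency check; but the leading-ray coloring above handles all type-$2$ configurations uniformly.
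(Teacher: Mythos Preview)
Your proposal is correct and is essentially the paper's own argument: your ``leading ray'' is exactly the paper's ``dominating ray,'' and your coloring rule (leader of $\mathcal R_1$ gets color $1$, the rest $2$; leader of $\mathcal R_2$ gets color $2$, the rest $1$) and case-split verification match the paper's proof line for line. Your explicit articulation of the nesting/prefix structure is a bit more careful than the paper's terse phrasing, but there is no substantive difference in approach.
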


\begin{proof}
    For all subsets of overlapping rays in $\mathcal{R}_1$ select the \textit{dominating} ray, which will be represented by the extremal starting point, i.e., the dominating ray contains all other overlapping rays. Color these dominating rays with color 1. Assign all other rays in $\mathcal{R}_1$ a color 2. Similarly
    for all subsets of overlapping rays in $\mathcal{R}_2$ select the dominating ray, which will be represented by the extremal starting point,  and color it with color 2. Assign all other rays in $\mathcal{R}_2$ a color 1.

    We prove that that the coloring $\phi$ is proper. Consider a hyperedge $e$ corresponding to a point $p$ with $|e|\ge 2$. If the point $p$ is at the overlap of rays in $\mathcal{R}_1$ or $\mathcal{R}_2$, then by our coloring algorithm, there exists $v_i, v_j\in e$ such that $\phi(v_i) \neq \phi(v_j)$. If the point $p$ is at the intersection of two rays of different orientations, then it is also at the intersection of the respective dominating rays and the corresponding vertices have different colors. This finishes the Lemma.  
\end{proof}

By Proposition~\ref{prop:coloring-to-extr-thm} and Lemma~\ref{lem:type2ray} we obtain the Extraction Theorem with an extraction number of $2$. The bound 2 is tight. Consider a common point at the intersection of two rays, each of different orientations. At least one of the two rays is needed to cover the point.

\begin{lemma}\label{lem:type3ray}
   Suppose $\mathcal{R} = \{R_1, R_2, \dots, R_m \}$ is a set of rays of type 3 where the set can be split into three subsets, $\mathcal{R}_1, \mathcal{R}_2, \mathcal{R}_3$ with rays in each subset of same orientation. A proper 3-coloring of $H(\mathcal{R})$ can be computed in polynomial time. 
\end{lemma}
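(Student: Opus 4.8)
The plan is to generalize the dominating-ray construction of Lemma~\ref{lem:type2ray} from two colors to three. First I would record the structural normalization implied by type~3: among the three orientations present, exactly two are parallel to one axis and the third is parallel to the other axis (for concreteness, say $\mathcal{R}_1,\mathcal{R}_2$ are horizontal and $\mathcal{R}_3$ is vertical; all other type-3 combinations are symmetric). As in the two-color case, two rays of the same orientation intersect only when they lie on a common line, in which case one contains the other; hence the rays of each $\mathcal{R}_t$ split into groups, one per line, and each group has a unique \emph{dominating} ray (the one with the extremal starting point) that contains every ray of its group.

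Next I would describe the coloring. Assign the distinguished color $t$ to every dominating ray of $\mathcal{R}_t$, for $t\in\{1,2,3\}$, and give each dominated ray of $\mathcal{R}_t$ an arbitrary color different from $t$. This uses exactly three colors and is clearly computable in polynomial time (group the rays by line, sort, and select extrema). The point of giving the three orientations three distinct distinguished colors is that any two dominating rays of different orientations automatically receive different colors.

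To verify properness, I would fix a hyperedge $e=o(p)$ with $|e|\ge 2$ and exhibit two rays of $e$ with different colors, splitting into two cases according to whether two rays covering $p$ share an orientation. If $p$ is covered by two rays of the same orientation $\mathcal{R}_t$, they lie on a common line, so the dominating ray of that group (color $t$) also covers $p$, while at least one of the covering rays is dominated (color $\ne t$); thus $e$ is non-monochromatic. If $p$ is covered by rays of two different orientations, I would argue that the dominating rays of the two corresponding groups both still cover $p$ and carry the two distinct distinguished colors.

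The step I expect to require the most care is this last cross-orientation case when the two orientations share an axis (e.g.\ an $\mathcal{R}_1$ ray and an $\mathcal{R}_2$ ray meeting on a common horizontal line, overlapping along a segment rather than at an isolated point). Here I must check the monotonicity that underlies the whole argument: because the dominating ray extends at least as far as any member of its group in the unbounded direction, a point lying in the overlap of some $\mathcal{R}_1$ ray and some $\mathcal{R}_2$ ray also lies in the overlap of their two dominating rays, which then supply colors $1$ and $2$. The perpendicular case (horizontal meeting vertical at a point) is the same dominance statement applied coordinatewise. Once this monotonicity is in place, every hyperedge is seen to be bichromatic and the lemma follows.
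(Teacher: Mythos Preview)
Your proposal is correct and uses essentially the same idea as the paper: give the dominating rays of each orientation class a distinguished color and give dominated rays any other color, then argue that every hyperedge contains either a dominating/dominated pair within one class or two dominating rays from different classes. The only cosmetic difference is that the paper invokes Lemma~\ref{lem:type2ray} as a black box to handle $\mathcal{R}_1\cup\mathcal{R}_2$ with colors $\{1,2\}$ before introducing color~3 for the dominating rays of $\mathcal{R}_3$, whereas you unfold this into a single symmetric three-color construction.
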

\begin{proof}
Without loss of generality, we assume that the rays in $\mathcal{R}_1, \mathcal{R}_2$ are parallel to the $x$-axis and rays in $\mathcal{R}_3$ are parallel to the $y$-axis. Lemma~\ref{lem:type2ray} shows that a proper 2-coloring of $H(\mathcal{R}_1 \cup \mathcal{R}_2)$ can be computed in polynomial time. Let us call these colors 1 and 2.
Consider the addition of rays from $\mathcal{R}_3$. Color the dominating rays in this set with color 3 and all other rays with any other color from 1 or 2. 

We prove that this coloring $\phi$ is proper. Consider a hyperedge $e$ corresponding to a point $p$ that is contained in 2 rays. If the point is at the overlap of rays in $\mathcal{R}_1 \cup \mathcal{R}_2$, then by Lemma~\ref{lem:type2ray} there exists $v_i, v_j\in e$ such that $\phi(v_i) \neq \phi(v_j)$. If the point $p$ is at the intersection of two rays, respectively from  $(\mathcal{R}_1 \cup \mathcal{R}_2)$ and $\mathcal{R}_3$, then $e$ contains a vertex of color 3 and another vertex of color $1$ or $2$. Therefore,
there exists $v_i, v_j\in e$ such that $\phi(v_i) \neq \phi(v_j)$. This finishes the Lemma.  
\end{proof}

By Proposition~\ref{prop:coloring-to-extr-thm} and Lemma~\ref{lem:type3ray} we obtain the Extraction Theorem with an extraction number of 3. In order to prove that $3$ is the lower bound,
consider a set of $3k$ rays containing exactly $k$ rays of orientations 1, 2, and 3 as shown in Figure \ref{fig:3-types} for $k=4$. Orientation 1 and 2 rays touch each other at a single point. We claim that the independence number of the corresponding intersection graph is at most $k+1$. Consider any independent set $I$. Let $i$ be the maximum index of an orientation 2 ray in $I$. Also, let $j$ be the minimum index of an orientation 1 ray in $I$. If $i \ge j$, then orientation 1 and 2 rays in $I$ intersect all rays of orientation 3, and thus $I$ contains only orientation 1 and 2 rays. But, due to their arrangement, $I$ contains at most $k$ rays. So, suppose $i < j$. Thus, there can be at most $i$ rays of orientation 2 and $k-j+1$ rays of orientation 1 in $I$. Now, the orientation 2 ray of index $i$ intersects the first $i$ orientation 3 rays. Also, the orientation 1 ray of index $j$ intersects the last $k-j$ orientation 3 rays. Thus, the number of orientation 3 rays in $I$ is at most $k-(i+k-j)=j-i$, and so the total number of rays in $I$ is at most $i+(k-j+1)+j-i=k+1$. It follows that the size of any cover of the intersection points of the rays is at least $2k-1$. Hence, the extraction number must be at least $3/(1+1/k)$, which tends to $3$ as $k$ increases. 

\begin{figure}[ht]
    \centering
    \includegraphics[width=0.2\linewidth]{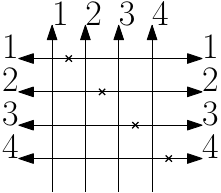}
    \caption{Figure showing a lower bound instance for type 3 rays.}
    \label{fig:3-types}
\end{figure}

\begin{lemma}\label{lem:type4ray}
    Suppose $\mathcal{R} = \{R_1, R_2, \dots, R_m \}$ is a set of rays where the set can be split into four subsets, $\mathcal{R}_1, \mathcal{R}_2, \mathcal{R}_3, \mathcal{R}_4$  with rays in each subset of same orientation. A proper 4-coloring of $H(\mathcal{R})$ can be computed in polynomial time. 
\end{lemma}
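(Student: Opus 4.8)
The plan is to mirror the strategy used for axis-parallel segments in Lemma~\ref{lem:2Dseg}: partition $\mathcal{R}$ into its horizontal and vertical parts, color each part with its own palette of two colors using the type-2 result, and then argue that the two sub-colorings combine into a proper 4-coloring because cross-orientation intersections are automatically resolved by the disjointness of the palettes.

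Concretely, I would first group the four orientation classes so that $\mathcal{R}_1 \cup \mathcal{R}_2$ consists of the horizontal rays (orientations $1$ and $2$) and $\mathcal{R}_3 \cup \mathcal{R}_4$ consists of the vertical rays (orientations $3$ and $4$). Each of these two sets is exactly a configuration of type $2$ in the sense of Lemma~\ref{lem:type2ray}, being a union of two subsets of same-orientation rays. Applying Lemma~\ref{lem:type2ray} to $\mathcal{R}_1 \cup \mathcal{R}_2$ yields, in polynomial time, a proper $2$-coloring of $H(\mathcal{R}_1 \cup \mathcal{R}_2)$ whose two colors I would name $1$ and $2$; applying it to $\mathcal{R}_3 \cup \mathcal{R}_4$ yields a proper $2$-coloring of $H(\mathcal{R}_3 \cup \mathcal{R}_4)$ whose colors I would name $3$ and $4$. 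The combined coloring $\phi$ thus uses colors $\{1,2\}$ on horizontals and $\{3,4\}$ on verticals.

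To verify properness, I would take any hyperedge $e = o(p)$ with $|e| \ge 2$ and split into cases according to the orientations of the rays through $p$. If $e$ contains both a horizontal ray and a vertical ray, then it contains a vertex colored from $\{1,2\}$ and one colored from $\{3,4\}$; since these palettes are disjoint, $e$ is non-monochromatic. Otherwise all rays through $p$ are horizontal (or all are vertical), so $e$ is also a hyperedge of size at least $2$ in $H(\mathcal{R}_1 \cup \mathcal{R}_2)$ (resp. $H(\mathcal{R}_3 \cup \mathcal{R}_4)$), and the properness of the corresponding type-2 coloring supplied by Lemma~\ref{lem:type2ray} guarantees two distinct colors inside $e$. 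Finally, combining properness with Proposition~\ref{prop:coloring-to-extr-thm} gives the extraction number of $4$.

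I do not expect a genuine obstacle here, since two horizontal rays can meet only when they lie on a common horizontal line (and likewise for verticals), so every conflict within one color family stays inside a single sub-configuration and is handled by Lemma~\ref{lem:type2ray}, while the only new intersections created by superimposing the two families are horizontal-vertical crossings, which are forced to be bichromatic by the palette separation. The one step requiring care is confirming that every same-orientation conflict really corresponds to a size-at-least-$2$ hyperedge of the relevant type-2 hypergraph, so that Lemma~\ref{lem:type2ray} applies verbatim; this is immediate because the arrangement of $\mathcal{R}$ restricted to only its horizontal (resp. vertical) rays refines into cells of the arrangement of $\mathcal{R}_1 \cup \mathcal{R}_2$ (resp. $\mathcal{R}_3 \cup \mathcal{R}_4$).
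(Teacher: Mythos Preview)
Your proposal is correct and follows essentially the same strategy as the paper: split into horizontal and vertical rays, 2-color each half with a disjoint palette, and observe that cross-orientation intersections are automatically bichromatic. The only cosmetic difference is that the paper obtains the 2-coloring of each half by clipping rays to segments and invoking Lemma~\ref{lem:2Dseg} (hence Lemma~\ref{lem:intervals}), whereas you invoke Lemma~\ref{lem:type2ray} directly on each half; your route is arguably the more natural one since Lemma~\ref{lem:type2ray} is already stated for rays.
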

\begin{proof}
    Using a similar strategy as the proof of Lemma~\ref{lem:2Dseg}, we can compute a proper 4 coloring of $H(\mathcal{R})$ in polynomial time. This is true, as each ray can be treated as a segment 
    with extremal points on the rays acting as endpoints of the segment. 
\end{proof}

By Proposition~\ref{prop:coloring-to-extr-thm} and Lemma~\ref{lem:type4ray} we obtain the Extraction Theorem with an extraction number of 4. 
We note that the example instance from \cite{coaduroindependence} shown in Figure \ref{fig:MK} can be modified to convert the segments into rays. Specifically, the right, left, up, and down segments can be made to be rays of orientations 1, 2, 3, and 4, respectively. Hence, the bound 4 is tight even for rays.    

The above discussion can be combined into a single theorem for axis parallel rays of type $i$. 

\begin{theorem}\label{thm:typeiray}
Axis-parallel rays of type $i$ ($1 < i \leq 4$) admit an extraction number of $i$ and there are instances with axis-parallel rays of type $i$ where the minimum extraction number is $i$.
\end{theorem}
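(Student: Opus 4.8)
The plan is to assemble Theorem~\ref{thm:typeiray} as a clean packaging of the per-type results already proved, separating the upper-bound (extraction-number) direction from the lower-bound (tightness) direction for each value of $i\in\{2,3,4\}$. For the upper bounds, I would invoke Proposition~\ref{prop:coloring-to-extr-thm} together with the three coloring lemmas: Lemma~\ref{lem:type2ray} gives a proper $2$-coloring for type $2$, Lemma~\ref{lem:type3ray} a proper $3$-coloring for type $3$, and Lemma~\ref{lem:type4ray} a proper $4$-coloring for type $4$. Since each coloring is computed in polynomial time, Proposition~\ref{prop:coloring-to-extr-thm} immediately yields an extraction number of $i$ for a configuration of type $i$. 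One subtlety worth a sentence: a configuration of type $i$ is defined by containing rays of exactly $i$ orientations, so I must note that any type-$i$ configuration is a special case covered by the relevant lemma (for instance, a type-$3$ configuration using orientations $1,2,4$ can be relabeled to match the hypotheses of Lemma~\ref{lem:type3ray} by a reflection, since the lemma only used that two of the three orientations share an axis and the third is perpendicular).

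For the lower bounds, I would reuse the three tightness constructions already given in the discussion surrounding the lemmas. For $i=2$, the instance is two rays of different orientations sharing a single point, forcing any cover to pick at least one of the two, which pins the extraction number at exactly $2$. For $i=3$, I would cite the $3k$-ray construction (Figure~\ref{fig:3-types}) whose intersection graph has independence number at most $k+1$; since a minimum cover then has weight at least $3k-(k+1)=2k-1$ out of total weight $3k$, the extraction number is at least $3/(1+1/k)\to 3$. For $i=4$, I would invoke the modified Coaduro~et~al.\ configuration of Figure~\ref{fig:MK} in which the right, left, up, and down segments are turned into rays of orientations $1,2,3,4$; the same independence-number bound of $\tfrac{m}{4}+c\sqrt{m}$ carries over, driving the extraction number to $4$ as $m\to\infty$.

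The structurally simplest way to write the proof is therefore as three parallel cases, each a two-line argument: one line applying Proposition~\ref{prop:coloring-to-extr-thm} plus the matching lemma for the upper bound, and one line recalling the matching lower-bound instance. I expect no genuine obstacle in the upper-bound direction, since everything reduces to already-established lemmas. The only point demanding slight care is that the lower-bound instances for $i=3$ and $i=4$ are \emph{asymptotic}: they show the extraction number cannot be any absolute constant strictly less than $i$, rather than exhibiting a single finite instance attaining exactly $i$. I would phrase the statement and proof so that ``the minimum extraction number is $i$'' is understood in this limiting sense, exactly as in the preceding segment and ray discussions, and I would make explicit that the type-$2$ case is the one place where a single finite instance already forces the bound exactly. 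This consistency of interpretation across the three cases is the main thing to get right.
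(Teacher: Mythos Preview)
Your proposal is correct and follows essentially the same approach as the paper: the theorem is stated as a summary of the preceding discussion, so the proof is precisely the assembly of Proposition~\ref{prop:coloring-to-extr-thm} with Lemmas~\ref{lem:type2ray}--\ref{lem:type4ray} for the upper bounds and the three tightness constructions (two intersecting rays, the $3k$-ray configuration of Figure~\ref{fig:3-types}, and the ray-version of Figure~\ref{fig:MK}) for the lower bounds. Your added remarks on orientation relabeling and on the asymptotic nature of the $i=3,4$ lower bounds are valid clarifications that the paper leaves implicit.
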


\subsection{Octants in 3D} 
\begin{lemma}\label{lem:octant}
    Given a set $\mathcal{O} = \{O_1, O_2, \dots, O_m\}$ of octants containing $(+\infty, +\infty, +\infty)$, a proper 4-coloring of $H(\mathcal{O})$ can be computed in polynomial time.
\end{lemma}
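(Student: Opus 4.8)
The plan is to reduce the coloring of octants containing $(+\infty,+\infty,+\infty)$ to a planar/lower-dimensional coloring problem by exploiting the monotone structure of such octants. An octant containing $(+\infty,+\infty,+\infty)$ has the form $\{(x,y,z) : x \ge a, y \ge b, z \ge c\}$, i.e., it is determined by its corner $(a,b,c)$, and one octant contains another precisely when its corner is coordinatewise smaller. The key observation I would first record is that for points, containment in such octants is a downward-monotone condition, so the combinatorics of the arrangement is governed by the dominance (product) order on the corners. This is exactly the structure that makes quadrants in the plane $2$-colorable and octants the natural $3$D analogue.

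The main step is to set up a sweep or recursive decomposition along one coordinate, say $z$. I would sort the octants by their $z$-corner value $c$ and process them in decreasing order of $c$, so that when an octant is introduced, all octants that could contain a common point with it (in the region $z \ge c$) have already been placed. Projecting onto the $xy$-plane, the ``active'' octants at any $z$-level behave like quadrants of the form $\{x \ge a, y \ge b\}$, which are quadrants containing $(+\infty,+\infty)$. By the cited result of Cardinal et al.\ (Table~\ref{table:1}), or by a direct argument analogous to Lemma~\ref{lem:type2ray}, such quadrants admit a proper $2$-coloring. The strategy is to combine a $2$-coloring in the $xy$-projection with one additional bit of information coming from the $z$-sweep, yielding a palette of size $4 = 2 \times 2$; concretely I would pair the planar $2$-coloring with a coloring that distinguishes ``dominating'' octants at each $z$-level from the rest, in the same spirit as how dominating rays are singled out in Lemmas~\ref{lem:type2ray}--\ref{lem:type3ray}.

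To verify properness I would take any hyperedge $e = o(p)$ with $|e| \ge 2$, i.e., a point $p$ contained in at least two octants, and let $O$ be the octant in $e$ whose corner has the largest $z$-value (the ``lowest'' one reaching $p$). If $e$ contains two octants that already differ in the $xy$-projection's $2$-coloring, we are done immediately. The delicate case is when all octants of $e$ agree in the projected color, and there I would use the $z$-sweep component: among octants sharing a projected color, the containment structure forces a dominating/non-dominating distinction at $p$'s $z$-level, so two of them receive different colors in the second coordinate of the palette. This mirrors the argument that two nested or crossing objects of the same planar color are separated by the dominance bit.

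The hard part will be making the interaction between the two coordinates rigorous: unlike segments, where the horizontal and vertical families intersect only transversally (the clean situation exploited in Lemma~\ref{lem:2Dseg}), octants can be nested, partially overlapping, and stacked simultaneously in all three coordinates, so the $z$-sweep and the $xy$-projection are not independent. I must ensure that the $2$-coloring used in the projection is consistent across $z$-levels (i.e., an octant keeps a single color even as the active set changes as $z$ decreases) and that the dominance bit I assign genuinely certifies non-monochromaticity for every monochromatic-in-projection edge. I expect that choosing the sweep direction carefully and invoking the monotonicity of corner-dominance — so that the ``active quadrants'' only grow as the sweep proceeds — is what closes this gap and keeps the whole procedure polynomial time.
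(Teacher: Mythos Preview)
Your product-coloring scheme has a genuine gap, and the counterexample is simple enough to write down. Take three octants with corners $O_1=(0,3,0)$, $O_2=(3,0,0)$, $O_3=(1,1,10)$. Their $xy$-projections are quadrants $Q_1=\{x\ge 0,y\ge 3\}$, $Q_2=\{x\ge 3,y\ge 0\}$, $Q_3=\{x\ge 1,y\ge 1\}$. The only hyperedges of size $\ge 2$ in $H(\{Q_1,Q_2,Q_3\})$ are $\{Q_1,Q_3\}$, $\{Q_2,Q_3\}$, $\{Q_1,Q_2,Q_3\}$, and every proper $2$-coloring is forced (up to swap) to give $Q_1,Q_2$ the same color and $Q_3$ the other. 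On the $z$-side, $O_1,O_2$ project to the identical ray $[0,\infty)$ and $O_3$ to $[10,\infty)$, so any $2$-coloring of the $z$-rays gives $O_1,O_2$ the same $z$-bit. Hence $O_1$ and $O_2$ receive the same product color. But the point $p=(4,4,5)$ lies in $O_1$ and $O_2$ and \emph{not} in $O_3$ (since $5<10$), so $o(p)=\{O_1,O_2\}$ is monochromatic. The witness $Q_3$ that makes the planar coloring proper at $(4,4)$ corresponds to an octant whose $z$-corner is too high to contain $p$; this is exactly the failure mode your last paragraph worries about, and neither the sweep direction nor a ``dominance bit'' repairs it here, because $O_1$ and $O_2$ have the same $z$-corner and neither dominates the other.

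The paper takes a different route that sidesteps this interaction entirely. It first discards dominated octants (any point in a dominated octant is also in its dominator, so those are handled by a single spare color), and then intersects the remaining octants with a single diagonal plane $x+y+z=c_{\max}$ chosen far enough out that every pairwise intersection of undominated octants meets it. On that plane each octant becomes an equilateral triangle with a fixed orientation, i.e.\ a homothet, and such a family is a pseudo-disk family; the known proper $4$-coloring for pseudo-disk hypergraphs then does all the work at once. The point is that a \emph{single} two-dimensional slice already captures the full intersection pattern of the undominated octants, so no product structure or sweep consistency is needed.
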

\begin{proof}
    For $1\le i\le m$, let $(a_i,b_i,c_i)$ be the apex of $O_i$. Thus, $O_i$ is defined as the unbounded region $\{x\ge a_i,y\ge b_i,z\ge c_i\}$. An octant $O_i$ is said to \textit{dominate} another octant $O_j$ if $O_i$ contains $O_j$, or equivalently if $a_j\ge a_i, b_j\ge b_i$, and $c_j\ge c_i$. Let $\mathcal{O}'$ be the maximal subset of $\mathcal{O}$ such that for any octant $O_i$ in $\mathcal{O}'$, there is no $O_j$ in $\mathcal{O}$ that dominates $O_i$. First, we compute $\mathcal{O}'$ by comparing the pairs of octants. 
For all pairs of octants $O_i,O_j$ in $O'$, let $c_{ij} = \max(a_i,a_j)+\max(b_i,b_j)+\max(c_i,c_j)$. Let $c_{\max} = \max_{i,j} c_{ij}$. 
Let $\mathcal{T}'$ be the set of triangles formed by projecting the octants of $O'$ on the plane $x+y+z = c_{max}$. 
\begin{observation}
 The triangles in $\mathcal{T}'$ are equilateral and oriented in the same direction. Therefore, they form a set of pseudo-disks.
\end{observation}
Since the set $\mathcal{T}'$ is a set of pseudo disks, a proper 4-coloring of the corresponding 
hypergraph $H(\mathcal{T}')$ can be computed in polynomial time \cite{keszegh2020coloring}. Color each octant in $\mathcal{O}'$ by the color of its projection in $\mathcal{T}'$. 
By the maximality of $\mathcal{O}'$, for any $O_i$ in $\mathcal{O} \setminus \mathcal{O}'$, there is an $O_j$ in $\mathcal{O}'$ that dominates $O_i$. Color $O_i$ by a color among the 4 that is not used for $O_j$. Thus, we have colored all octants in $\mathcal{O}$ by 4 colors.
We prove the coloring is proper. 

Consider any hyperedge $e$ with $|e|\ge 2$ corresponding to a point $p$. Suppose $p$ is in $O_i \in \mathcal{O} \setminus \mathcal{O}'$. Then, for some $j \neq i$,  $p$ is also in $O_j \in \mathcal{O}'$. Since $O_i$ and $O_j$ are colored by distinct colors by our scheme, $p$ is contained in two octants of distinct colors. Now, suppose $p$ is in $O_i, O_j$ that are both in $\mathcal{O}'$. Note that $O_i \cap O_j$ contains all the points for which  $x \geq \max(a_i,a_j), y \geq \max(b_i,b_j), z \geq \max(c_i,c_j)$. Therefore, this intersection also is an octant $O_{ij}$ with apex at $\max(a_i,a_j)$, $\max(b_i,b_j)$, $\max(c_i,c_j)$. As $c_{\max} \geq c_{ij} = \max(a_i,a_j)+\max(b_i,b_j)+\max(c_i,c_j)$, the intersection of $O_{ij}$ with $x+y+z = c_{\max}$ is non-empty. Hence, the intersection of the projections of $O_i$ and $O_j$ which belong to $\mathcal{T}'$ 
is also non-empty, so the 
two projected triangles are colored by distinct colors.  This implies that $O_i$ and $O_j$ are colored by distinct colors. 
\end{proof}

\begin{theorem}
    Octants admit an extraction number of 4 and there are instances with octants where the minimum extraction number is 4. 
\end{theorem}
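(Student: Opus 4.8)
The plan is to split the theorem into its two halves: the upper bound (extraction number at most $4$) and the matching lower bound (there are instances forcing $4$). The upper bound is essentially immediate from the machinery already in place. By Lemma~\ref{lem:octant}, any set $\mathcal{O}$ of octants containing $(+\infty,+\infty,+\infty)$ admits a proper $4$-coloring of $H(\mathcal{O})$ computable in polynomial time. Feeding this into Proposition~\ref{prop:coloring-to-extr-thm} with $\kappa = 4$ yields that octants admit an extraction number of $4$. So the first paragraph of the proof would simply invoke these two earlier results and conclude the upper bound in one sentence.

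The substantive part is the lower bound, and here my plan mirrors the argument already used for axis-parallel segments in the proof of the earlier segment theorem. The key observation is that an octant $\{x\ge a, y\ge b, z\ge c\}$ containing $(+\infty,+\infty,+\infty)$ can simulate an axis-parallel segment (or ray) object in a suitable projection. Concretely, I would try to embed the tight instance of Figure~\ref{fig:MK} (the Coaduro~et~al.\ configuration of $m=4k^2$ axis-parallel segments whose intersection graph has independence number at most $\tfrac{m}{4}+c\sqrt{m}$) into the octant setting, realizing each horizontal and vertical segment as an octant so that two segments cross exactly when the corresponding octants share the relevant intersection point. Alternatively, since Theorem~\ref{thm:typeiray} already establishes a tight lower bound of $4$ for axis-parallel rays of type~$4$, and rays are limiting cases of segments, I would aim to lift that ray construction into $3$D octants.

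Concretely, the reduction I expect to work is to place the rays of orientations $1,2,3,4$ on a common plane and thicken each into an octant by choosing its third coordinate apex so that the octants' pairwise intersection pattern restricted to the plane $x+y+z=c_{\max}$ reproduces exactly the planar crossing structure of the rays. Then the intersection graph of the octants at the relevant witness points coincides with that of the ray (or segment) instance, so any cover must pick all but an independent set, giving $W(\mathsf{sol})\ge \tfrac{3m}{4}-c\sqrt{m}$. Arguing exactly as in the segment case, if the extraction number were $4-\epsilon$ for an absolute constant $\epsilon$, we would get $W(\mathsf{sol})\le \tfrac{m}{4-\epsilon}$, and combining the two bounds forces $\epsilon \le \tfrac{16c}{4c+\sqrt{m}}\to 0$ as $m\to\infty$, a contradiction.

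The main obstacle will be the embedding itself: verifying that the planar segment/ray incidences translate faithfully into octant containments and intersections, with no \emph{spurious} octant intersections created that would shrink the independence number and weaken the bound, and with every genuine crossing point surviving as a hyperedge of size $\ge 2$. In the octant world, $O_i$ and $O_j$ intersect precisely when the coordinatewise maxima of their apices define a nonempty region meeting the witness plane, so I must assign the third coordinate to each object carefully enough that coplanar ``parallel'' segments of the same orientation remain disjoint (as in the original instance) while only the intended horizontal-vertical crossings produce shared points. Once that faithful correspondence is established, the independence-number bound transfers verbatim and the counting argument closes the lower bound exactly as before.
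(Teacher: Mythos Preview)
Your upper bound is correct and identical to the paper's: invoke Lemma~\ref{lem:octant} and Proposition~\ref{prop:coloring-to-extr-thm}.

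For the lower bound, you take a genuinely different and far more elaborate route than the paper. The paper simply exhibits \emph{four} unit-weight octants whose projections onto a plane $x+y+z=c$ are four congruent equilateral triangles arranged so that each of the $\binom{4}{2}=6$ pairs has a region covered by exactly that pair and no third triangle; placing one point in each such region forces any cover to use at least $3$ of the $4$ octants, so at most weight $1$ of $4$ can be extracted and the extraction number is exactly $4$. This is the same style of argument as the four-disk example in the introduction, and it closes the theorem in a few lines with no asymptotics.

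Your plan instead attempts to lift the Coaduro--et--al.\ segment/ray instance into octants and reuse the independence-number bound. This is not wrong in spirit, but it is heavy machinery for a statement that admits a four-object witness, and the embedding step you flag as ``the main obstacle'' is genuinely nontrivial and not carried out: you must produce, for each crossing $(i,j)$ in the planar instance, a $3$D point lying in $O_i\cap O_j$ and in \emph{no other} $O_k$, while all octants of this type pairwise intersect by construction. Whether the full MK incidence pattern can be realised this way by homothetic triangles (pseudo-disks) is not obvious and would need a concrete construction; until then the argument is incomplete. There is also a slip in your inequality: from extraction number $4-\epsilon$ one gets $W(\mathsf{sol})\le m - \tfrac{m}{4-\epsilon}$, not $W(\mathsf{sol})\le \tfrac{m}{4-\epsilon}$.
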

\begin{proof}
By Proposition~\ref{prop:coloring-to-extr-thm} and Lemma~\ref{lem:octant}, we obtain the Extraction Theorem with an extraction number of 4 for octants. 
We can construct an instance with octants where the minimum extraction number is 4. 
Consider a configuration of 4 octants whose projections on a plane $x+y+z=c$ for some $c$ are shown in Figure~\ref{fig:octants}. Moreover, all points lie on this plane.  
\begin{figure}
    \centering
    \includegraphics[width=0.3\linewidth]{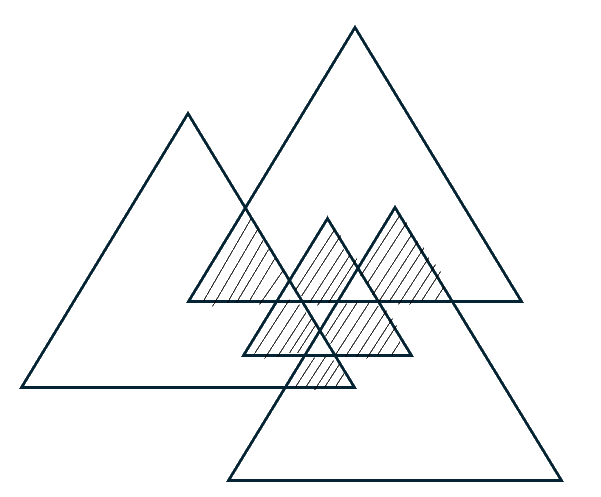}
    \caption{Configuration of projection of octants demonstrating extraction number $\geq 4$}
    \label{fig:octants}
\end{figure}
In this construction, every two triangles intersect pairwise and there exists a region (shown by $6$ shaded regions) which is the intersection of exactly two pairwise intersecting triangles. If there are points in each of these shaded regions, then $3$ out of $4$ octants will be needed to cover these points. Thus, the bound 4 is tight and the minimum extraction number of $4$ is achieved for this configuration. 
\end{proof}

\section{Conclusion}
In this work we proved Extraction Theorems along with the tightness of the bounds for classes of geometric objects with small extraction numbers. The proof of our theorems also yields  algorithms to compute the  geometric cover corresponding to these extraction numbers in polynomial time. 

Our work focused on classes of objects that include intervals, axis-parallel segments, axis-parallel rays, and octants. We investigate these classes of objects and prove small bounds on the extraction numbers. However, there is a vast class of objects for which extraction numbers are unknown or are not constant and depend on the size of the input. Some of these classes include unit disks in $\mathbb{R}^3$ and rectangles in $\mathbb{R}^2$. The mapping of the problem to that of hypergraph coloring and linking the extraction number to a proper $k$-coloring of the hypergraph  provides a framework to investigate these classes of objects. 

\paragraph{Acknowledgements} We are grateful to Shakhar Smorodinsky for pointing out the connection to hypergraph coloring. We also thank the reviewers for their valuable comments and suggestions. The work of Bandyapadhyay has been supported by the NSF grant 2311397. 
\bibliographystyle{plainurl}
\bibliography{main}

\end{document}